\newtheorem{lemma}[theorem]{Lemma}
\let\epsilon=\varepsilon
\def\vechat#1{\mathbf{\hat{#1}}}
\def\R{{\mathbb R}}
\title{Conic Crease Patterns with Reflecting Rule Lines}
\email{edemaine@mit.edu} 
\email{mdemaine@mit.edu} 
\email{duks@pratt.edu} 
\email{tachi@idea.c.u-tokyo.ac.jp}
  \let\runtitle\@title
  \let\runauthor\shortauthor
\begin{document}

\maketitle

\begin{abstract}
  We characterize when two conic curved creases are compatible with each other,
  when the rule lines must converge to conic foci and reflect at the crease.
  Namely, two conics are compatible (can be connected by rule segments in a
  foldable curved crease pattern) if and only if they have equal or reciprocal
  eccentricity.  Thus, circles (eccentricity~$0$) and parabolas
  (eccentricity~$1$) are compatible with only themselves (when scaled
  from a focus), and ellipses (eccentricity strictly between $0$ and $1$)
  and hyperbolas (eccentricity above $1$) are compatible with themselves
  and each other (but only in specific pairings).
  The foundation of this result is a general condition relating
  any two curved creases connected by rule segments.
  We also use our characterization to analyze several curved crease designs.
\end{abstract}

\section{Introduction}
\label{sec:introduction}
Curved folding has attracted artists, designers, engineers, scientists and mathematicians~\cite{Sternberg2009, CurvedCrease_Symmetry},
but its mathematics and algorithms remain major challenges in origami science.  
The goal of this work is to characterize curved-crease
origami possible within a particular restricted family of designs, roughly
corresponding to the extensive curved-crease designs of the third author
\cite{Huffman_Origami5, HuffmanLens_Origami6, DuksThesis}.
Specifically, we assume three properties of the design,
together called \emph{naturally ruled conic curved creases}:
\begin{enumerate}
\item Every crease is curved and a \emph{quadratic spline}, i.e., decomposes
into pieces of conic sections (circles, ellipses, parabolas, or hyperbolas).
\item The \emph{rule segments} (straight line segments on the 3D folded surface)
within each face of the crease pattern \emph{converge} to a common point (i.e.,
pass through that point if extended to infinite lines called \emph{rule lines}),
specifically, a \emph{focus} of an incident conic.
(As in the projective plane, we view parabolas as having one focus at infinity,
and lines as having two identical foci at infinity; rule lines meeting at
a point at infinity means that they are all parallel.)
As a result, the folded state is composed of (general) cones and cylinders.
\item Each crease has a \emph{constant fold angle} all along its length.
By an equation of \cite{Fuchs-Tabachnikov-2007-developable},
%\cite{Fuchs-Tabachnikov-1999}
this constraint is equivalent to the rule segments \emph{reflecting} through
creases, i.e., whenever a rule segment touches the interior of a crease,
its reflection through the crease is also (locally) a rule segment (effectively
forming a locally flat-foldable vertex at the crease).%
\footnote{In his early work, the third author called this ``refraction'', by
  analogy to optics \cite{DuksThesis}, but geometrically it is reflection.}
\end{enumerate}

This family of curved-crease origami designs is natural because, if
rule segments converge to a focus of a conic, then the reflected rule segments
on the other side of the conic also converge to a focus of that conic.
In this way, conics provide a relatively easy way to bridge between pencils%
\footnote{We use the term ``pencil'' from projective geometry to refer to
  infinite families of lines or segments that converge to a common point.}
of rule segments that converge to various points.

We show in Section~\ref{sec:conics}
that designs within this family must, rather surprisingly, satisfy a
stringent constraint: any two conics that \emph{interact} in the sense of
being connected by rule segments must have
either identical or reciprocal \emph{eccentricity}
(the constant ratio between, for every point on the conic, its distance to a
focus point and its distance to a directrix).
The eccentricity is always nonnegative and finite; $0$ for circles;
$1$ for parabolas; strictly between $0$ and $1$ for ellipses, and
$>1$ for hyperbolas.  Because $0$ has infinite reciprocal and $1$ is
its own reciprocal, circles can interact only with
circles and parabolas can interact only with parabolas.
Ellipses can interact with ellipses of the same eccentricity,
and hyperbolas can interact with hyperbolas of the same eccentricity,
but also every eccentricity of ellipse has exactly one eccentricity of
hyperbola that it can interact with.
Table~\ref{table:compatible} summarizes the possibilities.
Further, we show that interacting conics of the same eccentricity (and
thus the same curve type) are scalings of each other through the shared
focus that the rule segments converge to.

\begin{table}
  \centering
  {\small
  \begin{tabular}{l|m{6mm}m{23mm}m{30mm}m{23mm}}
              & circle & ellipse           & parabola & hyperbola \\
    \hline
    circle    &  yes   &    no             &    no    & no \\
    ellipse   &   no   & yes if scaled     &    no    & yes if reciprocal eccentricity  \\
    parabola  &   no   &    no             &   yes if scaled, shifted, or mirrored    & no \\
    hyperbola &   no   & yes if reciprocal eccentricity &    no    & yes if scaled \\
  \end{tabular}
  }
  \caption{Which conics are compatible in the sense that they are foldable
    when connected by rule segments that converge to a common focus
    and reflect at the conic creases.}
  \label{table:compatible}
\end{table}

The foundation of these results about naturally ruled curved crease patterns
is a general result about any two curved creases
that interact by being connected by rule segments.
%(zero-thickness unstretchable) paper.
Following the formalism introduced at last OSME \cite{HuffmanLens_Origami6},
we give a differential equation defining
a necessary relation between any two such curved creases in Section~\ref{sec:between-creases}.
This condition, in particular, lets you compute the fold angle along one curved
crease from the fold angle along the other curved crease (assuming the given
ruling), which by propagation can determine all creases' fold angles from one
crease's fold angle.  
Combined with a known condition along a single curved crease, these conditions
are in some sense complete, characterizing foldability of curved crease--rule
patterns other than closure constraints; see Theorem~\ref{vertex-free}.
The conditions also give a nice construction of \emph{rigid-ruling}
folding motions (which preserve rulings): they exist provided any single 3D
folded state exists (similar to rigid origami \cite{Tachi-2008-quad}).

%Unfortunately, we show that designs in this family (with more than one crease)
%rarely ``exist'' in the sense of properly folding in 3D out of perfect
%(zero-thickness unstretchable) paper.  Following the formalism introduced
%at last OSME \cite{HuffmanLens_Origami6}, we give a differential equation
%defining a necessary relation between any two creases
%connected by rule segments.
%We use this differential equation to show that rarely can two conic creases
%be connected by rule segments that satisfy all the conditions above.  There are
%a few exceptions, namely when the two connected conics have identical or
%reciprocal \emph{eccentricity} (a measure of deviation from circularity).
%But essentially all other designs, which look like fine crease--rule patterns
%in 2D, cannot actually fold into 3D.

Finally, in Section~\ref{sec:models}, we use these results to analyze several
conic-crease designs by the third author.  In some cases, we show that the
natural ruling works well.  In other cases, we show that the natural ruling
cannot possibly work, as it violates our eccentricity constraint.
The latter ``impossible'' designs still fold well in practice; all this means
is that the ruling must be different than than the intended (natural) ruling.

\section{Curved Folding Primer}
\label{sec:primer}
\subsection{Notation}
Our notation follows~\cite{HuffmanLens_Origami6}, but has been somewhat simplified
to focus on the case of interest and to exploit the previously proved
structural properties of curved creases.
%In particular, because we focus here on the behavior of a family of rule lines
%connecting two creases, we 
In particular, we consider only ``smoothly folded'' ($C^1$)
``curved'' (not straight) creases, which in fact implies that the folded
crease is a $C^2$ nonstraight curve \cite[Corollary~20]{HuffmanLens_Origami6}.
Furthermore, we restrict when every crease is \emph{uniquely ruled},
i.e., every point of the crease has a unique rule segment on either side,
%(and in particular no flat triangle or cone apex attached to the crease),
which is equivalent to forbidding flat patches and
cone rulings (where many rule segments share a point of the crease).
If presented with a crease with any of these complications (nonsmooth point,
transition to straight, corner of a flat patch, or apex of a cone ruling),
we can call that crease point a ``vertex'', subdivide the crease at all such
vertices, and then focus on the (curved) subcreases.

By the bisection property \cite[Theorem~8]{HuffmanLens_Origami6}, such nice creases allow the following notation of the signed curvature and a consistent top-side Frenet frame.
Refer to Figure~\ref{fig:bisection}.

\paragraph{2D crease.} For a point $\vec x(s)$ on an arc-length-parameterized $C^2$ 2D crease $\vec x : (0,\ell) \to \mathbb R^2$, we have a top-side Frenet frame $\left(\vec t(s), \vechat n(s), \vechat b(s)\right)$, where $\vec t(s) :={d \vec x(s) \over d s}$ is the tangent vector, $\vechat b(s) := \vec e_z$ is the front-side normal vector of the plane, and $\vechat n(s) := \vechat b(s) \times \vec t(s)$ is the left direction of the crease.
We call $\hat k(s) = {d \vec t(s) \over d s}\cdot \vechat n(s)$ the \emph{signed curvature} of the unfolded crease.

\paragraph{Folded crease.}
For a point $\vec X(s)$ on an arc-length-parameterized $C^2$ folded crease $\vec X : (0,\ell) \to \mathbb R^3$, we have a top-side Frenet frame $\left(\vec T(s), \vechat N(s), \vechat B(s)\right)$, where $\vec T(s) :={d \vec X(s) \over d s}$ is the tangent vector, $\vechat B(s)$ is the top-side normal of the osculating plane of the curve, and $\vechat N(s) := \vechat B(s) \times \vec T(s)$ is the left direction of the crease.
Here, $\vechat B(s)$ consistently forms positive dot products with the surface normals on left and right sides by the bisection property \cite[Theorem~8]{HuffmanLens_Origami6}.
Now, $\hat K(s) = {d \vec T(s) \over d s}\cdot \vechat N(s)$ is called the \emph{signed curvature} of the folded crease, and $ \tau(s) = -{d \vechat B(s) \over d s} \cdot \vechat N(s)$ is called the \emph{torsion}.
If the signed curvature is positive, the curve turns left, and if it is negative, the curve turns right with respect to the front side of the surface.
This top-side Frenet frame satisfies the usual Frenet--Serret formulas:
$$ 
\begin{bmatrix}
     0     & \hat K(s)     & 0       \\
     -\hat K(s) & 0        & \tau(s) \\
     0     & -\tau(s) & 0       \\
   \end{bmatrix}
 \cdot
   \begin{bmatrix}
     \vec T(s) \\
     \vechat N(s) \\
     \vechat B(s) \\
   \end{bmatrix}
 =
   {d \over d s}
   \begin{bmatrix}
     \vec T(s) \\
     \vechat N(s) \\
     \vechat B(s) \\
   \end{bmatrix}.
$$

\begin{figure}
  \centering
  \includegraphics[width=0.8\linewidth]{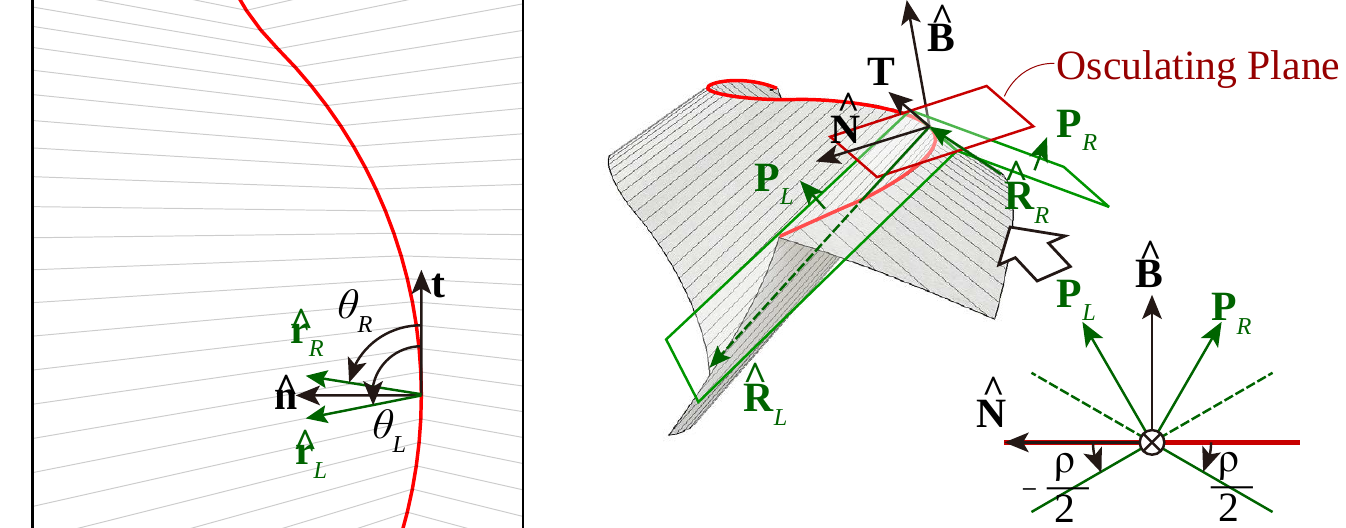}
  \caption{Notation around a folded curved crease.}
  \label{fig:bisection}
\end{figure}

\paragraph{Folded Surfaces Around a Crease.}
The neighborhood of a crease $\vec x$ folds to two developable surfaces attached to~$\vec X$.
The surface normals of the left and right surfaces at crease point $\vec X(s)$ are denoted by $\vec P_L(s)$ and $\vec P_R(s)$.
The bisection property \cite[Theorem~8]{HuffmanLens_Origami6} states that vector $\vechat B(s)$ bisects $\vec P_L(s)$ and $\vec P_R(s)$.
Define the fold angle as the signed angle $\rho$ from $\vec P_R$ to $\vec P_L$ in the right-screw direction of~$\vec T$.
By the bisection property, $\vec P_L$ is a $\frac 1 2 \rho$ rotation of $\vechat B$ and $\vec P_L$ is a $-\frac 1 2 \rho$ rotation of $\vechat B$ around $\vec T$.
By the flat intrinsic isometry of paper, the geodesic (signed) curvature of $\vec X(s)$ on the left and right surfaces must be equal to $\hat k(s)$. Thus we obtain the following basic equation:
\begin{align}
\label{eq:curvature}
\hat K(s)\cos \frac1 2 \rho(s) &= \hat k(s).
\end{align} 

\subsection{Single Crease Condition}
We review the conditions that must hold locally at a curved crease,
as mentioned without derivation e.g. in \ \cite{Fuchs-Tabachnikov-1999},
but adapted to our terminology.
\paragraph{Ruling vectors and angles.}
Define $\vechat R_L(s)$ and $\vechat R_R(s)$ to be the unit ruling vectors of the left and right surfaces (i.e., defining segments from $\vec X(s)$ on these surfaces) but with sign chosen to make them directed to the left side, i.e., $\vechat R_L(s) \cdot \vechat N(s)$ and $\vechat R_R(s) \cdot \vechat N(s)$ are both positive.
Define \emph{signed ruling angles} $\hat \theta_L(s)$ and $\hat \theta_R(s)$ as the angles from $\vec T$ to $\vechat R_L(s)$ and $\vechat R_R(s)$ in the right screw directions of $\vec P_L(s)$ and $\vec P_R(s)$, respectively.
In other words,
\begin{align}
\vechat R_i(s) 
&=  \cos \hat\theta_i(s)\vec T(s) +  \sin \hat\theta_i(s)\left(\vec P_i(s) \times \vec T(s)\right)\nonumber\\
&= \cos \hat\theta_i(s)\vec T(s) +  \sin \hat\theta_i(s) \cos \frac{\sigma_i\rho}{2} \vechat N + \sin \hat\theta_i(s) \sin \frac{\sigma_i\rho}{2} \vechat B,
\label{R_i}
\end{align}
where $i \in \{L, R\}$, $\sigma_R = 1$, and $\sigma_L =  -1$.

The ruling angles can be computed as follows.
By the orthogonality of rulings to surface normals, $\vec R \cdot \vec P_i = 0$, we get its derivative also being zero, i.e., $\vechat R' \cdot \vec P_i +  \vec R \cdot \vec P_i' = 0$.
By the developability of the surface, $\vechat R' \cdot \vec P_i = 0$. Thus
\begin{align}
\vechat R \cdot \vec P_i' = 0.
\end{align}
Using
\begin{align}
\vec P_i &= \cos \frac{\rho}{2} \vechat B + \sigma_i \sin\frac{\rho}{2} \vechat N,
\label{P_i}
\end{align}
we obtain
\begin{align}
\vechat R \cdot \vec P_i' & = \hat K\sin\frac{\sigma_i\rho}{2} \cos \hat\theta_i - \left(\tau + \frac{\sigma_i\rho'}{2}\right) \sin \hat\theta_i.
\end{align}
Using~(\ref{eq:curvature}), we get
\begin{align}
\label{eq:ruling}
 \cot \hat\theta_i = \frac{1}{\hat k}\left(\tau + \frac{\sigma_i\rho'}{2}\right) \cot\frac{\sigma_i\rho}{2}.
\end{align}
Here, we used that the crease is curved, i.e., $\hat k \neq 0$, and properly and smoothly folded, i.e., $\rho \notin \{0, \pi \}$, and thus the crease has no rulings tangent to the crease, i.e., $\theta_i \neq 0$.
This equation is equivalent to Equation (1) of~\cite{Fuchs-Tabachnikov-1999}.
Therefore, a single crease has left and right side rulings that satisfy
\begin{align}
\label{eq:ruling-rule}
 \frac{\cot \hat\theta_L + \cot \hat\theta_R}{2} = \frac{1}{\hat k}\frac{\rho'}{2}  \cot\frac{\rho}{2}.
\end{align}

\paragraph{Ruling angles for special cases.}
In the curved crease designs of the third author,
two special cases are often used:
\begin{enumerate}
\item When the fold angle $\rho$ is constant along the crease, Equation~\ref{eq:ruling-rule} gives $\cot \hat\theta_L +  \cot \hat\theta_R = 0$, i.e., $\hat\theta_L = \pi -\hat\theta_R$, meaning that the rulings \emph{reflect} through the crease.
The angles are given by
\begin{align}
\label{eq:ruling-rule-reflect}
-\cot \hat\theta_L =  \cot \hat\theta_R =  \frac{1}{\hat k}\tau \cot\frac{\rho}{2}.
\end{align}
\item When the folded curve $\vec X$ is a planar curve, i.e., $\tau = 0$, Equation~\ref{eq:ruling-rule} gives
\begin{align}
 \cot \hat\theta_i = \frac{1}{\hat k}\frac{\rho'}{2} \cot\frac{\rho}{2}.
\end{align}
So, in particular, $\hat\theta_L = \hat\theta_R$, meaning that the rulings just penetrate the crease without changing their angle.
\end{enumerate}

\section{Compatibility Between Creases} 
\label{sec:between-creases}
\begin{figure}
  \centering
  \includegraphics[width=0.70\linewidth]{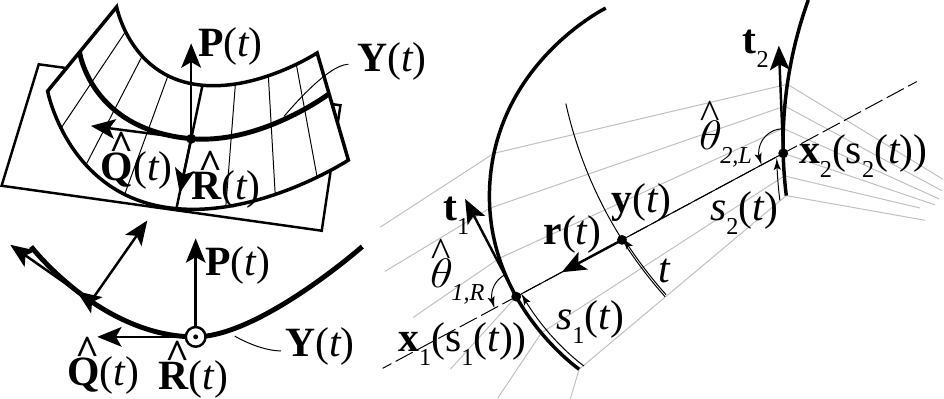}

  \caption{Notation for rule segments: $\left( \vechat Q, \vechat R, \vec P\right)$ frame, and a family of rule segments connecting two curved creases.}
 \label{fig:rulelineMV}
\end{figure}
Next, we give compatibility conditions between creases connected by rule segments.
%We also characterize the property when the rule lines reflect through the crease.
Consider two curves $\vec X_1(s_1)$ and $\vec X_2(s_2)$, each parameterized by its own arc length,
with a family of rule segments connecting corresponding points of both creases;
refer to Figure~\ref{fig:rulelineMV}.
Suppose the correspondence between $\vec X_1(s_1)$ and $\vec X_2(s_2)$ is given
by two functions $s_1(t)$ and $s_2(t)$ in a single parameter~$t$.
For a sufficiently small patch of rule segments, we can find an
arc length-parameterized principal curvature line $\vec Y(t)$ on a ruled surface
such that each rule segment $\vec X_1(s_1(t))\vec X_2(s_2(t))$
intersects $\vec Y(t)$.
Along the principal curvature line $\vec Y(t)$, we consider the following frame $\left(\vechat Q(t), \vechat R(t), \vec P(t)\right)$, where $\vechat Q(t) := \vechat R(t) \times \vec P(t)$ is the tangent vector of~$\vec Y(t)$.\footnote{The frame is not a Frenet frame but is a Darboux frame.}

The compatibility of surfaces can be sufficiently guaranteed by having the common principal curvature $V(t):={d \vechat Q(t) \over dt}\cdot \vec P(t)$ of the surface at $\vec Y(t)$.
This is because the curve and rulings are intrinsically compatible, and the other principal curvature is $0$ and common.
Because the surface between two creases is a developable surface, the surface should have the common surface normals share the rulings, $\vec P(t) = \vec P_1(s_1(t)) = \vec P_2(s_2(t))$.
$V(t)$ can be computed using $\vec P_i$ for $i \in \{1,2\}$ as (see \cite[Lemma~23 proof]{HuffmanLens_Origami6}
for a detailed derivation):
\begin{align}
V(t) &= {d \vechat Q(t) \over dt}\cdot \vec P(t)\\ 
%	&= -{d \vechat P(t) \over dt}\cdot \vechat Q(t)\\
% 	&= -{1 \over \sin \theta_i}{d \vechat P(t) \over dt}\cdot \vec T(t)\\
% 	&= {1 \over \sin \theta_i}\vec P(t) \cdot {d\vec T(t) \over dt}\\
 	&= {ds_i \over dt} {1 \over \sin \hat\theta_i}\hat K(s_i){\vechat N(s_i)}\cdot \vec T(s_i)\\
 	&= {ds_i \over dt} {1 \over \sin \hat\theta_i}\hat k(s_i)\sigma_i\tan{\rho_i \over 2},
\end{align}
where $\sigma_1= \sigma_R = +1$, and $\sigma_2 = \sigma_L=-1$, considering that the ruling connects to the right side of curve $\vec X_1$ and the left side of curve $\vec X_2$.
This gives the relationship between the corresponding points of two curves as follows:
\begin{align}
\label{eq:compatibility}
{ds_1\over dt}{1 \over \sin \hat\theta_1} \hat k_1 \tan {\rho_1 \over 2} &= 
-{ds_2\over dt}{1 \over \sin \hat\theta_2} \hat k_2 \tan {\rho_2 \over 2}.
\end{align}
Note that we can re-parameterize Equation~\ref{eq:compatibility} by a $C^1$ bijection $t \to t^*$, and the equation stays the same.
This means that we can compute the compatibility using Equation~\ref{eq:compatibility} for an arbitrary $C^1$ bijective parameter $t^*$ along a curve strictly intersecting the rule segments (not necessarily the principal curvature line).
This expression is equivalent to Equation~(19) of~\cite{Tachi-2013-transformables}.

\paragraph{Fold-angle assignment.}
Now observe that ruling angles $\hat\theta_i(s)$, 2D curvature $\hat k_i(s)$, and the correspondence ${ds_i \over ds_j}$ between curves $i$ and $j$ are intrinsic parameters fixed if the crease--rule pattern is given.
The unknown is $\rho(s)$ for each crease, which we call the \emph{fold-angle assignment}; the fold-angle assignment should satisfy Equations~\ref{eq:ruling-rule} and~\ref{eq:compatibility}.
These conditions are complete for every \emph{vertex-free} crease pattern, i.e., a crease pattern without vertices on the strict interior of the paper, on a hole-free (disk-topology) paper. 

\begin{theorem} \label{vertex-free}
A vertex-free uniquely ruled curved crease--rule pattern on a hole-free intrinsically flat piece of paper folds if and only if there exists a fold-angle assignment $\rho(s)$ for every crease such that Equation~\ref{eq:ruling-rule} is satisfied for each crease and Equation~\ref{eq:compatibility} is satisfied for each rule segment between crease points.
\end{theorem}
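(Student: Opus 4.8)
The plan is to prove the two directions separately; the forward implication is essentially a restatement of the derivations above, while the reverse implication requires explicitly reconstructing a folded state, and that is where the real work lies.

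For necessity, suppose the pattern folds. Then each crease carries a well-defined fold angle $\rho(s)$ inherited from the folded state. Equation~\ref{eq:ruling-rule} was derived solely from developability of the two surfaces incident to a single crease together with the orthogonality of each ruling to its surface normal, so it must hold along every crease. Likewise Equation~\ref{eq:compatibility} was obtained by equating, from the two creases bounding a developable face, the common principal curvature $V(t)$ of that face; hence it holds along every connecting rule segment. Thus the fold angles read off from any folded state satisfy both equations, as required.

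For sufficiency I would build a folded state from the given assignment $\rho(s)$. Along a single crease, Equation~\ref{eq:curvature} recovers $\hat K(s) = \hat k(s)/\cos\frac{\rho(s)}{2}$, and solving the ruling condition Equation~\ref{eq:ruling} for the torsion $\tau(s)$ yields a value independent of the choice $i \in \{L,R\}$ exactly because Equation~\ref{eq:ruling-rule} holds; the torsion is therefore well defined. Integrating the Frenet--Serret system with these $\hat K(s),\tau(s)$ produces a $C^2$ space curve $\vec X(s)$ realizing the crease, unique up to rigid motion. Equation~\ref{R_i} then assigns a ruling direction $\vechat R_i(s)$ on each side, and sweeping it out gives a ruled surface whose normal field along the crease is $\vec P_i$; this surface is developable and isometric to the corresponding flat face, because the torsion was chosen precisely so that the developable surface determined by the normal field $\vec P_i$ carries rulings at the prescribed intrinsic angle $\hat\theta_i$. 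Since the geodesic curvature of $\vec X$ on each side equals $\hat k$ by Equation~\ref{eq:curvature}, the left and right faces glue isometrically across the crease with dihedral angle $\rho$.

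It remains to fit together the faces between distinct creases and to assemble everything globally, which I expect to be the crux. For a face bounded by two creases joined by rule segments, sweeping the developable surface out from one crease determines its far edge; the compatibility Equation~\ref{eq:compatibility} forces the common principal curvature to agree as computed from the two sides, so this far edge is an isometric realization of the second crease carrying its assigned fold angle. Thus, fixing the rigid motion of a single starting face, the reconstruction determines the placement of every adjacent face, and hence of the creases on its far side, by propagation across the paper. The hypotheses now do the essential work: because the paper is simply connected (disk topology) and has no interior vertices, the faces assemble along a tree of adjacencies, with no independent cycles across which the propagation could fail to close up, so it incurs no monodromy and yields a single-valued isometric immersion of the whole disk realizing $\rho(s)$. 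The delicate point is precisely this global single-valuedness: the local matchings guaranteed by Equations~\ref{eq:ruling-rule} and~\ref{eq:compatibility} make each step well defined, and the only possible remaining obstructions---vertex-consistency at interior vertices and holonomy around nontrivial loops---are exactly what the vertex-free and hole-free hypotheses exclude.
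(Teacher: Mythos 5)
Your proposal is correct and follows essentially the same route as the paper's own proof: necessity by observing the equations were derived from any folded state, and sufficiency by reconstructing each folded crease via the Fundamental Theorem of Space Curves (with $\hat K$ from Equation~\ref{eq:curvature} and a well-defined $\tau$ from Equation~\ref{eq:ruling}, consistency across the two sides being exactly Equation~\ref{eq:ruling-rule}), sweeping the rule segments, matching adjacent creases through the shared principal curvature via Equation~\ref{eq:compatibility}, and assembling globally along a tree of crease adjacencies guaranteed by the vertex-free and hole-free hypotheses. Your explicit remark that the side-independence of the recovered torsion is precisely what Equation~\ref{eq:ruling-rule} encodes is a nice touch that the paper leaves implicit, but the overall decomposition and key steps coincide with the paper's argument.
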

\begin{proof}
Necessity is as described above.

To prove sufficiency, we describe the folding of many overlapping patches
of the pattern, and show that they agree on their overlap, and thus combine
together compatibly.
First, consider the \emph{crease graph} whose vertices are curved creases,
with two vertices connected by an edge when there are rule segments connecting
the corresponding curved creases.
Further, add ``one-sided edges'' to this graph corresponding to rule segments
that start at the crease and go to the paper boundary.
By the vertex-free and hole-free assumption, the crease graph is a tree.
Hence, if we can show how to fold each crease and its neighboring rule lines,
and show agreement along each edge of the crease graph, then there is a unique
way to combine them together (with no closure constraints to check).

The Fundamental Theorem of Space Curves uniquely determines each folded crease
as a space curve up to rigid motion, provided we can determine the (signed)
curvature $\hat K(s)$ and torsion $\tau(s)$.
We know the crease pattern (which determines the 2D frame
$\left(\vec t(s), \vechat n(s), \vechat b(s)\right)$ of $\vec x(s)$ and its
signed curvature $\hat k(s)$), the ruling (which determines the intrinsic angle
$\hat \theta_i(s)$), and the fold-angle assignment $\rho(s)$.
Equation~\ref{eq:curvature} gives us $\hat K(s)$ from $\hat k(s)$ and $\rho(s)$.
Equation~\ref{eq:ruling} gives us $\tau(s)$ from $\hat \theta_i$ and $\rho(s)$.
Thus we obtain the space curve $\vec X(s)$ up to rigid motion.
From its frame $\left(\vec T(s), \vechat N(s), \vechat B(s)\right)$,
we further obtain $\vechat R_i(s)$ by Equation~\ref{R_i} and
$\vec P_i(s)$ by Equation~\ref{P_i}.
By placing each ruling vector $\vechat R_i(s)$ as a segment starting at
$\vec X(s)$ and whose length equals the corresponding rule line in the 2D
crease--rule pattern, we sweep the two ruled surfaces incident to the crease.

Now consider two creases that share a family of rule segments.
We will prove that the reconstructed ruled surfaces from either crease
agree (up to rigid motion), and thus we can paste together the reconstructions.
First we cover the shared family of rule segments by multiple overlapping
sufficiently small patches such that, for each patch, we can draw the
principal curvature line $\vec Y(t)$ within the patch
(i.e., without getting clipped by the endpoints of the rule segments).
In this way, we can coordinatize the patch as viewed from either crease.
Equation~\ref{eq:compatibility} guarantees that these two coordinatizations
are identical.  Then we can glue together the overlapping patches
to form a unique joining of the two creases by the folded rule segment family.
\end{proof}

For full completeness, we would need to add closure constraints around
vertices in the crease pattern and around holes of the paper.
We leave this to future work.

\paragraph{Rigid-ruling folding.} 
If a folding motion of a piece of paper does not change the crease--rule
pattern throughout the motion, we call it a \emph{rigid-ruling folding}. 
In such a motion,  ${ds_i\over dt}{1 \over \sin \theta_i} \hat k_i$ are constant, so by Equation~\ref{eq:compatibility}, the tangent of half the fold angle of corresponding points keep their proportions to each other. 
In the case of constant-fold-angle creases with reflecting rule segments,
the tangent of half the fold angle at every point is proportional to each other.
\begin{theorem}
If a vertex-free uniquely ruled curved crease--rule pattern with reflecting rulings on hole-free intrinsically flat paper has a properly folded state,
then it has a rigid-ruling folding motion.
\end{theorem}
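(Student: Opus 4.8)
The plan is to exhibit an explicit one-parameter family of folded states, all sharing the given crease--rule pattern, obtained by uniformly scaling the fold angles; the continuity of this family is the desired folding motion. First I would record that, because the rulings reflect, the single-crease condition Equation~\ref{eq:ruling-rule} forces $\cot\hat\theta_L + \cot\hat\theta_R = 0$ and hence $\rho' = 0$ on each crease (using $\hat k \neq 0$ and $\rho \notin \{0,\pi\}$), so every admissible fold-angle assignment is a \emph{constant} $\rho_i$ per crease, and Equation~\ref{eq:ruling-rule-reflect} merely fixes the torsion of the reconstructed space curve. The given properly folded state supplies constants $\rho_i^0$ satisfying Equations~\ref{eq:ruling-rule} and~\ref{eq:compatibility} by Theorem~\ref{vertex-free}. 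The crucial observation is that, with the $\hat\theta_i$, $\hat k_i$, and correspondences $ds_i/dt$ all intrinsic and fixed, Equation~\ref{eq:compatibility} is \emph{homogeneous of degree one} in the quantities $\tan\frac{\rho_i}{2}$: scaling every $\tan\frac{\rho_i}{2}$ by a common factor multiplies both sides by that factor.

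Hence I would define, for a real parameter $\lambda$ normalized so that $\lambda = 1$ at the given state, a candidate assignment $\rho_i(\lambda)$ by $\tan\frac{\rho_i(\lambda)}{2} = \lambda\tan\frac{\rho_i^0}{2}$ on each crease. For every $\lambda$ the $\rho_i(\lambda)$ are constant along their creases, so Equation~\ref{eq:ruling-rule} holds; and by homogeneity Equation~\ref{eq:compatibility} holds for each shared rule family precisely because it held at $\lambda = 1$. Theorem~\ref{vertex-free} then certifies that each $\lambda$ yields a genuine 3D folded state of the \emph{same} crease--rule pattern (the reconstruction uses $\hat K = \hat k/\cos\frac{\rho}{2}$ and the $\tau$ determined by Equation~\ref{eq:ruling}, while the 2D data $\hat k_i, \hat\theta_i$ never change). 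Since the curvature and torsion of each folded crease depend continuously on $\lambda$, the folded states vary continuously; letting $\lambda$ range over an interval around $1$ (indeed over $(0,\infty)$, with each $\rho_i$ sweeping monotonically through $(0,\pi)$ or $(-\pi,0)$ and tending to the flat state as $\lambda \to 0^+$) gives a nonconstant rigid-ruling folding motion through the given state.

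The step I expect to be the main obstacle is the \emph{global} consistency of this uniform scaling: a priori the proportionality between $\tan\frac{\rho_i}{2}$ and $\tan\frac{\rho_j}{2}$ forced by Equation~\ref{eq:compatibility} along each edge must be mutually compatible across the whole pattern, and one must rule out closure obstructions. This is exactly where the vertex-free and hole-free hypotheses enter: by the argument of Theorem~\ref{vertex-free} the crease graph (with one-sided boundary edges) is a tree, so there are no cycles and hence no closure constraints to violate, and the single uniform factor $\lambda$ propagates consistently outward from any root crease. A secondary point to verify is that the resulting family is a genuine motion rather than a single configuration, i.e.\ that $\lambda$ truly varies in an interval and the folded states are pairwise distinct; this follows because each nonzero $\rho_i^0$ makes $\rho_i(\lambda) = 2\arctan\!\big(\lambda\tan\frac{\rho_i^0}{2}\big)$ strictly monotonic in $\lambda$.
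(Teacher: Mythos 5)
Your proposal is correct and follows essentially the same route as the paper: the paper's proof also replaces Equation~\ref{eq:ruling-rule} by Equation~\ref{eq:ruling-rule-reflect} (absorbing it into the choice of $\tau$), takes the constant fold angles $\rho_i^*$ of the given folded state, and scales $\tan\frac{\rho_i}{2} = u \tan\frac{\rho_i^*}{2}$ with $u$ running from $0$ to $1$, invoking Theorem~\ref{vertex-free} to certify each intermediate state. Your additional checks (that reflecting rulings force $\rho' = 0$, homogeneity of Equation~\ref{eq:compatibility}, and the tree structure eliminating closure constraints) are details the paper leaves implicit, but the underlying argument is the same.
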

\begin{proof}
Theorem~\ref{vertex-free} tells us that it suffices to show a continuous
changing of the fold angles $\rho$ while satisfying
Equations~\ref{eq:ruling-rule} and~\ref{eq:compatibility}.
Because we are in the reflecting-ruling case, we can replace
Equation~\ref{eq:ruling-rule} with Equation~\ref{eq:ruling-rule-reflect},
which we can always satisfy by setting $\tau$ accordingly with~$\rho$.
Given one solution to Equation~\ref{eq:compatibility},
denoted $\rho^*_i$ for each crease~$i$,
we can construct a continuous family of solutions by
\begin{align}
\tan{\rho_i \over 2} &= u \tan {\rho^*_i \over 2},
\end{align}
where $u$ ranges continuously from $0$ (completely unfolded state) to~$1$ (target state)
(corresponding to folding time).
Thus we obtain a rigid-ruling folding motion.
\end{proof}

This behavior of ``folded state implies rigid folding motion'' is analogous
to that of flat-foldable quadrilateral meshes \cite{Tachi-2008-quad}.

\section{Naturally Ruled Conic Curved Creases}
%\section{Conic Curved Creases with Reflecting Rule Lines Converging to Shared Foci}
Next, we apply the general compatibility conditions from
Section~\ref{sec:between-creases} to the special case of conic curved creases
satisfying the ``naturally ruled'' conditions from
Section~\ref{sec:introduction}: the rule segments reflect at the creases, and
on either side, converge to a focus of the conic
(viewing parabolas as having a second focus at infinity,
which leads to parallel rule segments).
Section~\ref{subsec:finite-focus} covers the finite case, and Section~\ref{subsec:infinite-focus} covers the infinite case.

\subsection{Conic curves sharing a focus of finite distance}
\label{subsec:finite-focus}
Consider conic curves sharing a finitely distant focus; assume that it is at the origin without loss of generality.
We consider the common parameter $t$ moving perpendicular to common radial rulings, i.e., the principal curvature line $\vec Y(t)= (\cos t, \sin t)$ is a circular arc around the origin.
Then the polar coordinates $\left(r(t),t \right)$ of a conic curve is given by
\begin{align}
r(t) = {a\over 1 + e \cos(t-\delta)},
\end{align}
where $e\in \R$ is the \emph{signed eccentricity} of the conic curve, $\delta\in(-{\pi\over 2},{\pi\over 2}]$ is the rotational offset of the whole pattern, and $a\in \R$ describes the scaling as the distance to the curve at $t = \frac{\pi}{2} - \delta$.
The absolute value of $e$ is the eccentricity, while its sign represents whether the closest vertex is on the right side ($+$) or on the left side ($-$).
As flipping the sign of $e$ rotates the curve by $\pi$, the range of the angular offset $\delta \in(-{\pi\over 2},{\pi\over 2}]$ is sufficient to represent all possible alignments of conic curves.
\label{sec:conics}

\begin{wrapfigure}{r}{50mm}
  \centering
  \vspace*{-2ex}
  \includegraphics[width=\linewidth, page=1]{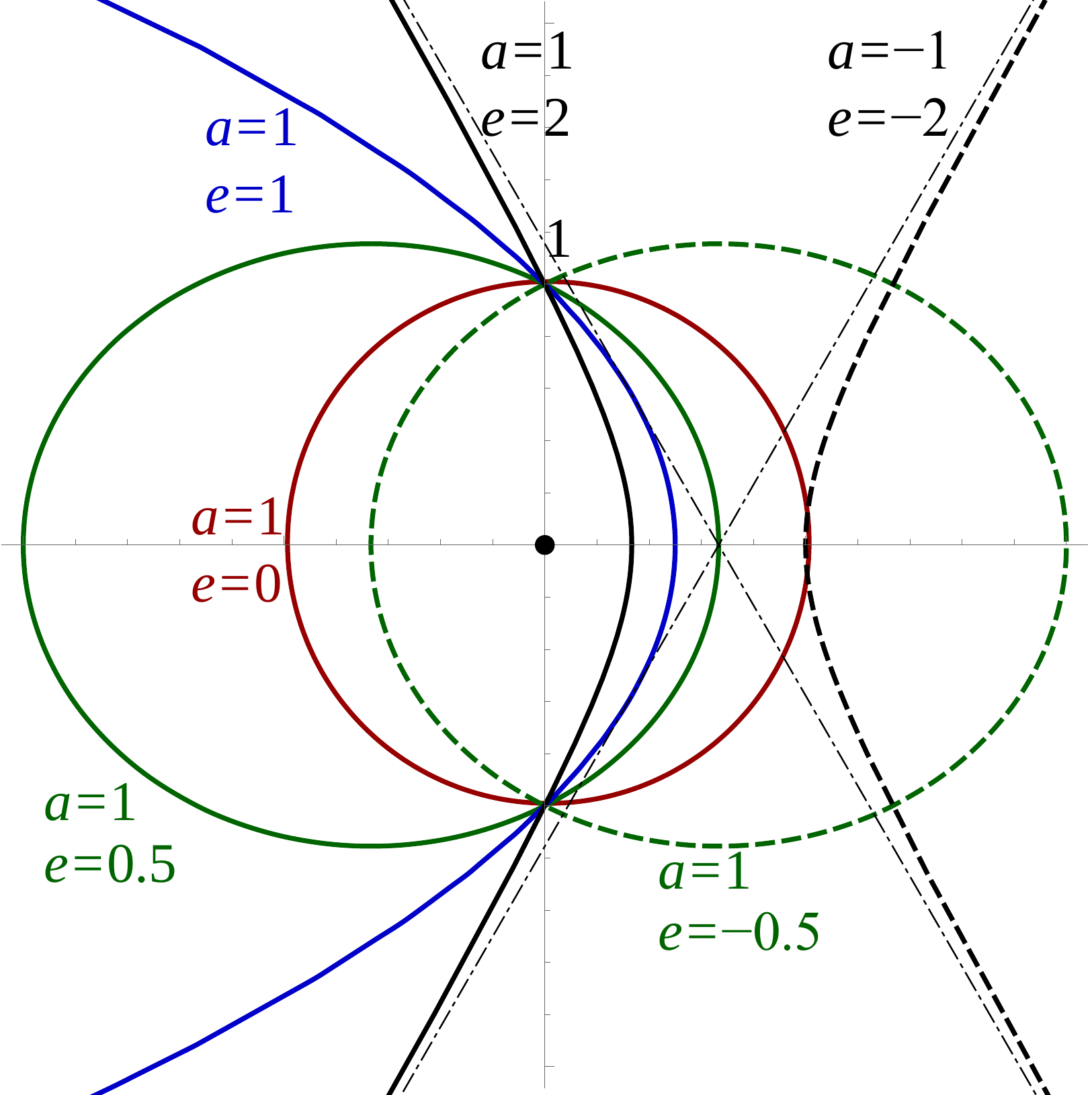}
  \caption{Different conics sharing a focus. All conics use $\delta=0$.}
  \vspace*{-2ex}
 \label{fig:conics}
\end{wrapfigure}

%Note that $r(t)$ has the same sign if the curve is ellipse ($-1<e<1$) or a parabola ($e=\pm1$), while in hyperbola ($e<-1, 1<e$) the sign flips at the asymptotic directions $1 = e \cos(t+\delta)$ and continues to draw the other side of the curve with negative $r$.
Also, to make sure that rule segments between two curves do not intersect, i.e., crossing over the focus, we forbid $r(t)$ from being negative; so we only take the part of the curves $r(t)>0$.
This restriction to $r(t)>0$ still allows us to represent every conic curve whose focus is at the origin because there is a freedom in choosing both signs of $a$ and~$e$.
More precisely, in the case of an ellipse or parabola, $a>0$ draws the full curves. 
In the case of a hyperbola, $a>0$ and $e>0$ will draw the branch of the hyperbola closer to the focus, and $a<0$ and $e<0$ will draw the other branch of the same hyperbola, i.e., the farther side from the focus.
%In case of hyperbola, $r_\textrm{close}(t) = {|a|\over 1 + |e| \cos(t-\delta)}$ will draw the branch of the hyperbola closer to the focus, while $r_\textrm{far}(t) = {-|a|\over 1 - |e| \cos(t-\delta)}$ will draw the other branch of the same hyperbola, i.e., the farther side from the focus.
%${-a\over 1 - e \cos(t+\delta)}$

Now consider creases $r_1(t)$ and $r_2(t)$ with parameter set $\{e,a,\delta \} = \{e_1, a_1,\delta_1 \},\allowbreak \{e_2,a_2,\delta_2 \}$, respectively. Here, we may rotate the whole figure to assume that $\delta_1 =0$. 
%Then, $t = \phi$ is the common parameter to describe the radial matching of rulings, and thus the circular arc $\vec Y(t) =r_0(\cos t, \sin t)$ is a principal curvature line along the parameter.
Then the ruling vector between the creases, and the tangent vectors, are given by 
\begin{align}
\label{eq:curvature1}
\vechat r &= -\left(\cos t, \sin t\right)\\
\vec t_i &= \mathrm{sgn}(a_i)
\left(\frac{- e_i \sin \delta_i -\sin  t }{\sqrt{e_i^2+2 e_i \cos (t-\delta_i)+1}},\frac{e_i \cos \delta_i +\cos  t }{\sqrt{e_i^2+2 e_i \cos (t-\delta_i)+1}}\right),
\end{align}
and the left-side normal vectors are given by $\vechat n_i = \vec t_i^\perp$, where $\perp$ denotes $\pi\over 2$ counterclockwise rotation of the original vector.
Notice that $t$ represents the arc length around the unit circle centered at the origin, which is a principal curvature line of the common ruled surface.
Therefore, the principal curvature along the unit circle on the right side of crease $1$ is given by
\begin{align}
V_1(t) &= {ds_1 \over dt} {1 \over \sin \hat\theta_1}\hat k_1(s_1)\tan{\rho_1 \over 2}\\
	&=  {d\vec t_1\over dt}\cdot \vechat n_1 \frac{1}{\vechat n_1(t) \cdot \vec r(t)} \tan {\rho_1 \over 2} \\
	&= \frac{\mathrm{sgn}(a_1)}{\sqrt{e_1^2+ 2 e_1 \cos (t-\delta_1)+1}}\tan {\rho_1 \over 2}.
\end{align}
Similarly, the principal curvature on the left side of crease $2$ is given by 
\begin{align}
\label{eq:curvature2}
V_2(t) &= -\frac{\mathrm{sgn}(a_2)}{\sqrt{e_2^2+ 2 e_2 \cos (t-\delta_2)+1}} \tan {\rho_2 \over 2}.
\end{align}
The two creases are compatible if and only if we can find $\rho_1$ and $\rho_2$ such that $V_1(t)\equiv V_2(t)$.
Because we assume reflecting rule lines, the fold angles must be constant, so $\rho_1$ and $\rho_2$ are also constants.

Notice that these expressions do not contain the scale factor $a$ (except for its sign), and thus the compatibility is scale independent.
In particular, there is an obvious solution $\mathrm{sgn}(a_1) = \mathrm{sgn}(a_2)$, $e_1=e_2$, $\delta_1=\delta_2$, and $\rho_1 = -\rho_2$.
\begin{lemma}
\label{lem:scale}
A naturally ruled curved conic crease pattern of two conic curves connected through converging rulings to the shared focus has a valid constant fold-angle assignment if two curves are the scaled version of each other by a positive scale factor with respect to the shared focus.
The fold angles of the creases have the same absolute value and opposite signs.
\end{lemma}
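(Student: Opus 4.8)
The plan is to read the scaling hypothesis through the polar parameterization and then exploit the scale-independence already noted just after Equation~\ref{eq:compatibility}. Scaling a conic about the shared focus at the origin by a positive factor~$\lambda$ sends $r(t)\mapsto\lambda\,r(t)$, which multiplies $a$ by~$\lambda$ while leaving the signed eccentricity~$e$ and the angular offset~$\delta$ untouched. So the hypothesis that the two creases are positive-scale copies of each other about the focus translates exactly into $e_1=e_2=:e$, $\delta_1=\delta_2=:\delta$, and $\mathrm{sgn}(a_1)=\mathrm{sgn}(a_2)$ (with $a_2/a_1=\lambda>0$). After the global rotation that sets $\delta_1=0$, we likewise have $\delta_2=0$.

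First I would substitute these equalities into the displayed expressions for $V_1(t)$ and $V_2(t)$. The radical denominators $\sqrt{e_i^2+2e_i\cos(t-\delta_i)+1}$ depend only on $e_i$ and $\delta_i$, so under $e_1=e_2$ and $\delta_1=\delta_2$ they become identical functions of~$t$; likewise the sign prefactors $\mathrm{sgn}(a_i)$ agree. Hence the compatibility requirement $V_1(t)\equiv V_2(t)$ collapses, after cancelling the common $t$-dependent factor, to the single \emph{$t$-independent} scalar equation
\begin{align*}
\tan\frac{\rho_1}{2} = -\tan\frac{\rho_2}{2}.
\end{align*}

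To finish, I would simply pick any proper constant fold angle $\rho_1\in(0,\pi)$ and set $\rho_2=-\rho_1$; then $\tan(\rho_2/2)=-\tan(\rho_1/2)$, so Equation~\ref{eq:compatibility} holds identically in~$t$, and the two fold angles have equal absolute value and opposite sign, as claimed. Because the fold angles are constant and the rulings reflect by construction, the single-crease condition~\ref{eq:ruling-rule} reduces to the reflecting relation~\ref{eq:ruling-rule-reflect}, which is satisfied by choosing the torsion $\tau$ accordingly with $\rho$ (exactly as in the rigid-ruling argument). Both conditions required of a valid fold-angle assignment then hold, and Theorem~\ref{vertex-free} further promotes this to an actual folded state.

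I do not expect a genuine obstacle here: this is the easy (sufficiency) direction, essentially formalizing the ``obvious solution'' flagged before the lemma statement. The only points requiring care are the faithful translation of ``scaled about the focus'' into the parameter equalities $e_1=e_2$, $\delta_1=\delta_2$, $\mathrm{sgn}(a_1)=\mathrm{sgn}(a_2)$, and the verification that the radical factors cancel \emph{exactly} for every~$t$ rather than merely at isolated parameter values, so that the reduction to a single scalar equation is legitimate.
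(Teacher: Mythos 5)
Your proposal is correct and follows essentially the same route as the paper: the paper's justification for Lemma~\ref{lem:scale} is exactly the observation that $V_1(t)$ and $V_2(t)$ contain $a_i$ only through $\mathrm{sgn}(a_i)$, so scaled copies (same $e$, same $\delta$, same sign of $a$) make the $t$-dependent factors cancel identically, leaving the ``obvious solution'' $\rho_1=-\rho_2$. Your extra care in translating the scaling hypothesis into parameter equalities and in noting that the single-crease condition~\ref{eq:ruling-rule} reduces to~\ref{eq:ruling-rule-reflect} for constant fold angles is a faithful elaboration of what the paper leaves implicit, not a different argument.
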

Now we want to narrow down and complete the possible set of corresponding conic curves.
Define the \emph{speed coefficient} of crease $1$ with respect to crease $2$ to be the constant $ p:= \tan {\rho_1 \over 2}   / \tan {\rho_2 \over 2}\neq 0$.
Then Equations~\ref{eq:curvature1} and~\ref{eq:curvature2} require that
%\begin{align}
%e_2^2+ 2 e_2 \cos (t+\delta_2)+1 &\equiv p^2 \left(e_1^2+ 2 e_1 \cos (t+\delta_1)+1\right),
%\end{align}
%and thus
\begin{align}
2p^2 e_2\cos (t-\delta_2) - 2 e_1\cos (t-\delta_1) + \left(p^2(e_2^2+1) - (e_1^2+1)\right) &\equiv 0.
\end{align}

\begin{wrapfigure}{r}{40mm}
  \centering
  \vspace*{-2ex}
  \includegraphics[width=\linewidth, page=1]{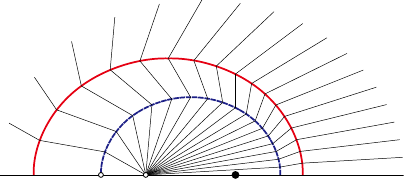}
  \caption{Naturally ruled conic curves that scale to each other (ellipse--ellipse).}
  \vspace*{-2ex}
 \label{fig:scale}
\end{wrapfigure}

If $\delta_1 \neq \delta_2$, then the two harmonic functions differ in their phase, and cannot cancel each other.
So the only possible solution is $e_1 = e_2 = 0$, i.e., two circles scaled with respect to their common center, which falls into the case of Lemma~\ref{lem:scale}.

If $\delta_1 = \delta_2 = 0$, then we get
\begin{align}
2(p^2 e_2 - e_1)\cos t + \left(p^2(e_2^2+1) - (e_1^2+1)\right) &\equiv 0.
\end{align}
Therefore, $2(p^2 e_2 - e_1)=0$ and $p^2 (e_2^2+1)- (e_1^2+1)=0$.
This yields either $e_1= e_2=0$; or $p^2 = \frac{e_1}{e_2}$ and $(-e_1+e_2)(e_1-\frac {1}{e_2})=0$.
The former represents two scaled circles dealt with by Lemma~\ref{lem:scale}.
In the latter nontrivial case, the eccentricity of two curves must be equal or reciprocal to each other.
Curves with equal eccentricity are scaled versions of each other, so this type falls into to the case of Lemma~\ref{lem:scale}.
The only interesting case left is when the eccentricities are reciprocal to each other, i.e., ellipse vs.\ hyperbola.

More precisely, a specific direction of the ellipse---$e > 0$ (left) or $e < 0$ (right)---corresponds to the specific branch of the hyperbola---$e > 0$ (close) or $e < 0$ (far), respectively---based on our formalization that avoids possible ruling intersection.
See Figure~\ref{fig:ellipse-hyperbola} for these cases, where (a) represents the case of positive eccentricity and (b) represents the case of negative eccentricity. 
In either case (a) or (b), the order that the ellipse and hyperbola appear can be arranged by changing their scale factor. 
This alters whether the rulings emanate out from both curves (right) or converge to the other foci (left).

\begin{figure}[htbp]
  \centering
  \includegraphics[width=\linewidth]{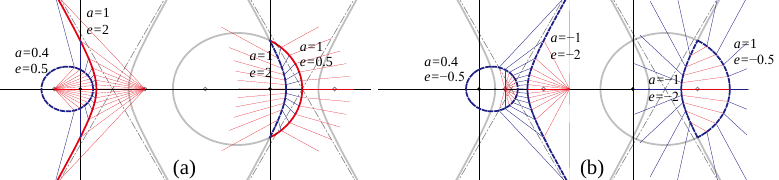}
  \caption{Ellipse--hyperbola interaction. (a) Ellipse and hyperbola compatible with fold-angle assignment of opposite signs. (b) Ellipse and hyperbola compatible with fold-angle assignment of the same sign.}
 \label{fig:ellipse-hyperbola}
\end{figure}

\begin{theorem}
\label{th:reciprocal}
A naturally ruled curved conic crease pattern of two conic curves connected through converging rulings to the shared focus in finite distance has a valid constant fold-angle assignment if and only if
\begin{enumerate}
\item the two curves are scaled versions of each other by a positive scale factor with respect to the shared focus;
\item the two curves are an ellipse and the branch of a hyperbola closer to the focus with reciprocal eccentricities, and the directions from the shared focus to the closest vertex of the ellipse and to the vertex of the hyperbola are the same; or
\item two curves are an ellipse and the branch of a hyperbola farther from the focus with reciprocal eccentricities, and the directions from the shared focus to the farthest vertex of the ellipse and to the vertex of the hyperbola are the same.
\end{enumerate}
In Case 1, the speed coefficient of the two creases is $-1$, i.e., they have the same absolute value but opposite sign;
in Case 2, the speed coefficient of the ellipse with respect to the hyperbola is $-e$ where $e$ is the eccentricity of the ellipse; and
in Case 3, the speed coefficient of the ellipse with respect to the hyperbola is $e$ where $e$ is the eccentricity of the ellipse.
\end{theorem}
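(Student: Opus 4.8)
The plan is to reduce the statement to the single scalar identity $V_1(t)\equiv V_2(t)$ already assembled from Equations~\ref{eq:curvature1}--\ref{eq:curvature2}, and then extract from it the admissible triples $(e_1,e_2,\delta_2)$ together with the exact speed coefficient $p=\tan\frac{\rho_1}{2}/\tan\frac{\rho_2}{2}$. Since reflecting rulings force each $\rho_i$ to be constant, $V_1\equiv V_2$ is an identity in $t$ between two expressions of the shape $c_i/\sqrt{e_i^2+2e_i\cos(t-\delta_i)+1}$. First I would square both sides; each radicand equals $(e_i+\cos(t-\delta_i))^2+\sin^2(t-\delta_i)>0$ away from parabola vertices, so squaring is reversible up to sign, and clearing denominators reaches the harmonic identity $2p^2e_2\cos(t-\delta_2)-2e_1\cos(t-\delta_1)+\big(p^2(e_2^2+1)-(e_1^2+1)\big)\equiv0$ already displayed in the text.

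The algebraic core is then a matching argument in $t$. Two cosine harmonics of different phase are linearly independent (together with the constant), so unless $\delta_2=\delta_1=0$ the only solution is $e_1=e_2=0$, i.e., circles scaled about the common center, which is Lemma~\ref{lem:scale}. When $\delta_2=0$, matching the $\cos t$ coefficient and the constant term gives $p^2e_2=e_1$ and $p^2(e_2^2+1)=e_1^2+1$; eliminating $p^2$ yields $(e_2-e_1)(e_1e_2-1)=0$. The factor $e_1=e_2$ gives equal eccentricity, hence radially scaled curves of the same type (Lemma~\ref{lem:scale}) with $p^2=1$; the factor $e_1e_2=1$ gives reciprocal eccentricity, which forces one ellipse and one hyperbola and, since $e_2=1/e_1$, gives $p^2=e_1/e_2=e_1^2$, so $|p|$ equals the ellipse's eccentricity $e$.

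The step I expect to be the main obstacle is recovering the \emph{signs} lost in squaring and converting them into the geometric conditions of Cases~1--3. Returning to the unsquared identity $V_1=V_2$ and using that both radicals are positive, the signs of the two sides must agree, which pins $\mathrm{sgn}(p)=-\mathrm{sgn}(a_1a_2)$. I would then translate the sign conventions of the formalism---$a>0$ for a full ellipse, $a>0,e>0$ for the near branch of a hyperbola and $a<0,e<0$ for the far branch, together with the rule that $\mathrm{sgn}(e)$ places the relevant vertex on the right or the left---into the branch-and-vertex-direction language of the theorem: the near branch ($e_2>0$, forcing $e_1>0$) lines the hyperbola vertex up with the ellipse's closest vertex and yields $p=-e$ (Case~2), while the far branch ($e_2<0,\ e_1<0$) aligns it with the ellipse's farthest vertex and yields $p=+e$ (Case~3); the equal-eccentricity branch, with both $a$'s of one sign, gives $p=-1$ (Case~1). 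Finally, for sufficiency, in each listed case substituting the stated $p$ back into $V_1,V_2$ makes $V_1\equiv V_2$ hold identically, so choosing any admissible constant $\rho_2$ and setting $\tan\frac{\rho_1}{2}=p\tan\frac{\rho_2}{2}$ produces a valid constant fold-angle assignment, which by the reflecting-ruling case of Theorem~\ref{vertex-free} folds.
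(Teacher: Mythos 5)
Your proposal is correct and follows essentially the same route as the paper: reduce to $V_1(t)\equiv V_2(t)$ with constant fold angles, derive the harmonic identity, use phase/coefficient matching to force $\delta_2=0$ and $(e_2-e_1)(e_1e_2-1)=0$, and verify sufficiency by substituting the stated speed coefficient back in. Your explicit sign rule $\mathrm{sgn}(p)=-\mathrm{sgn}(a_1a_2)$ is just a cleaner packaging of the branch-pairing analysis the paper carries out case by case (via Lemma~\ref{lem:scale} and the direct computations for Cases~2 and~3), so there is no substantive difference.
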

\begin{proof}
Necessity follows from the above discussion, and the sufficiency for Case 1 follows from Lemma~\ref{lem:scale}.
So, it suffices to show that Cases 2 and 3 can actually work.
Here, our parameterization of the curves forbid the rule segments from intersecting, and thus there is a valid ruling correspondence between the curves.
Now we check whether the curvature formed from the curves are compatible with each other.

In Case 2, we are matching ellipse 1 with parameters $a_1>0$, $e_1=e$, and hyperbola 2 with $a_2>0$, $e_2=\frac{1}{e}$, where $0<e<1$ (Figure~\ref{fig:ellipse-hyperbola}(a)).
Then
%\begin{align}
%V_1(t) &= \frac{1}{\sqrt{e_0^2+ 2 e_0 \cos t +1}}\tan {\rho_1 \over 2},\\
%V_2(t) %&= -\frac{1}{\sqrt{\frac{1}{e_0}^2+ 2 \frac{1}{e_0} \cos t +1}}\tan {\rho_2 \over 2}\\
%&= -\frac{e_0}{\sqrt{e_0^2+ 2 e_0 \cos t +1}}\tan {\rho_2 \over 2}
%\end{align}
\begin{align}
V_1(t) = \frac{\tan {\rho_1 \over 2}}{\sqrt{e^2+ 2 e \cos t +1}} 
\textrm{\quad and\quad}
V_2(t) = -\frac{e \tan {\rho_2 \over 2}}{\sqrt{e^2+ 2 e \cos t +1}}.
\end{align}
So, the speed coefficient of the ellipse with respect to the hyperbola $ \tan {\rho_1 \over 2}   / \tan {\rho_2 \over 2} = -e$ gives a valid fold-angle assignment.

In Case 3, we are matching ellipse 1 with parameters $a_1>0$, $e_1=-e$, and hyperbola 2 with $a_2<0$, $e_2=-\frac{1}{e}$, where $0<e<1$ (Figure~\ref{fig:ellipse-hyperbola}(b)).
Then 
\begin{align}
V_1(t) = \frac{\tan {\rho_1 \over 2}}{\sqrt{e - 2 e \cos t +1}} 
\textrm{\quad and\quad}
V_2(t) = -\frac{e \tan {\rho_2 \over 2}}{\sqrt{e^2 - 2 e \cos t +1}}.
\end{align}
So,  the speed coefficient of the ellipse with respect to the hyperbola $\tan {\rho_1 \over 2}   / \tan {\rho_2 \over 2} = e$ gives a valid fold-angle assignment.
We can also flip the left and right sides of the curves, and $V_1(t)\equiv V_2(t)$ still holds.
\end{proof}

\subsection{Two parabolas sharing a focus at infinity}
\label{subsec:infinite-focus}
To complete the relation between two conic curves, we consider the special case where the shared focus is at infinity, i.e., two parabolas share parallel rulings.
In this case, the common parameter $t$ can be taken along a principal curvature line $\vec Y(t) = (\textrm{const}, t)$ perpendicular to the common parallel rulings.
Using this parameter, the parabolas are represented by $\begin{pmatrix}a_i(t+\delta_i)^2 + b_i, & t \end{pmatrix}$, for $i=1,2$.
Then, the ruling vector is given by $\vec r(t) = (-1, 0)$ and tangent vector is given by $\vec t = \left(\frac{2 a_i (t+\delta_i)}{\sqrt{4 a_i^2 (t + \delta_i)^2+1}},\frac{1}{\sqrt{4 a_i^2 (t+\delta_i)^2+1}}\right)$.
The principal curvature is computed as
%\begin{align}
%V_1(t) &= -\frac{2 a_1}{\sqrt{4 a_1^2 (t+\delta_1)^2+1}}\tan \frac{\rho_1}{2}\\
%V_2(t) &=  \frac{2 a_2}{\sqrt{4 a_2^2 (t+\delta_2)^2+1}}\tan \frac{\rho_2}{2}
%\end{align}
\begin{align}
V_1(t) = -\frac{2 a_1 \tan \frac{\rho_1}{2}}{\sqrt{4 a_1^2 (t+\delta_1)^2+1}}
\textrm{\quad and\quad}
V_2(t) = \frac{2 a_2 \tan \frac{\rho_2}{2}}{\sqrt{4 a_2^2 (t+\delta_2)^2+1}}.
\end{align}
Similar to the cone case, consider the folding speed coefficients $p = \tan \frac{\rho_1}{2}/\tan\frac{\rho_2}{2}$.
Then the equivalence of $V_1$ and $V_2$ yields
%\begin{align}
%a_1^2a_2^2 (1-p^2) &= 0\\
%a_1^2a_2^2 (\delta_2 - \delta_1p^2)&= 0\\
%a_1^2-a_2^2 p^2 &= 0.
%\end{align}
\begin{align}
a_1^2a_2^2 (p^2-1) = 0
\textrm{,\quad\quad}
a_1^2a_2^2 (p^2\delta_2 - \delta_1)= 0
\textrm{,\quad and\quad}
p^2 a_1^2-a_2^2 = 0.
\end{align}
Therefore, $\delta_1 = \delta_2 = 0$, $p=\pm1$ and $a_1 = \mp a_2$, where the signs correspond, so only the following cases are possible.
\begin{figure}[htbp]
  \centering
  \includegraphics[width=0.6\linewidth]{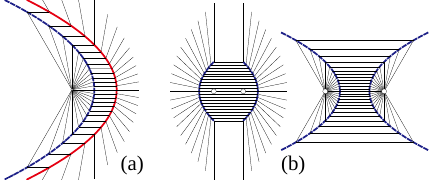}
  \caption{Parabola--parabola interaction with parallel rulings. (a) Parabolas being translation of each other along the ruling direction. (b) Parabolas reflection of each other through a line perpendicular to the rulings.}
 \label{fig:parabola-parabola}
\end{figure}
\begin{theorem}
\label{th:parabola}
Naturally ruled crease pattern of two parabolas connected through parallel rulings has a valid constant fold-angle assignment if and only if
\begin{enumerate}
\item the parabolas are translations of each other in the ruling direction, or
\item the parabolas are mirror reflections of each other with respect to a line perpendicular to the parallel rulings.
\end{enumerate}
In Case 1, two curves have opposite fold angles with the same absolute value; and
in Case 2, two curves have same fold angle.
\end{theorem}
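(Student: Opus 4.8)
The plan is to prove both directions, leaning on the parameter constraints already extracted above and supplying the geometric translation and the sufficiency check that the statement requires. For necessity, I would start from the compatibility requirement $V_1(t) \equiv V_2(t)$, recalling that reflecting rulings force $\rho_1,\rho_2$ to be constant, so $p := \tan\frac{\rho_1}{2}/\tan\frac{\rho_2}{2}$ is a well-defined nonzero constant (the creases are curved, so $a_1,a_2 \neq 0$, and proper folding gives $\rho_i \notin \{0,\pi\}$). Clearing the two square-root denominators and squaring turns $V_1 \equiv V_2$ into the polynomial identity $a_1^2 p^2 (4a_2^2(t+\delta_2)^2 + 1) = a_2^2(4 a_1^2 (t+\delta_1)^2 + 1)$ in $t$; matching the coefficients of $t^2$, $t^1$, and $t^0$ yields exactly the three displayed equations, hence $p^2 = 1$, $\delta_1 = \delta_2$, and $a_1^2 = a_2^2$.

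Then I would recover the sign correspondence that squaring hides. Returning to the unsquared identity $-a_1 p \sqrt{4a_2^2(t+\delta_2)^2+1} = a_2 \sqrt{4 a_1^2(t+\delta_1)^2+1}$ and using $\delta_1 = \delta_2$ together with $a_1^2 = a_2^2$ (so the two radicals coincide) forces $a_2 = -p\,a_1$. Thus $p=-1$ pairs with $a_1 = a_2$, and $p=+1$ pairs with $a_1 = -a_2$, which is the claimed ``$p=\pm1$, $a_1 = \mp a_2$, signs corresponding''. Using a global translation to set $\delta_1 = \delta_2 = 0$, the two parabolas become $\big(a_1 t^2 + b_1,\, t\big)$ and $\big(\pm a_1 t^2 + b_2,\, t\big)$. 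The case $a_1 = a_2$ differs only by the horizontal offset $b_1-b_2$, a translation in the ruling direction (Case 1); the case $a_1 = -a_2$ is $x \mapsto 2c - x$ about the vertical line $x = \tfrac{1}{2}(b_1+b_2)$, i.e.\ a mirror reflection across a line perpendicular to the rulings (Case 2). Reading off $p = \tan\frac{\rho_1}{2}/\tan\frac{\rho_2}{2}$ then gives $\rho_1 = -\rho_2$ in Case 1 and $\rho_1 = \rho_2$ in Case 2.

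For sufficiency I would simply substitute each case back into the formulas for $V_1$ and $V_2$. In Case 1, with $a_1 = a_2 =: a$ and $\delta_1=\delta_2$, the common radical cancels and choosing $\rho_1 = -\rho_2$ makes $V_1 \equiv V_2$; in Case 2, with $a_2 = -a_1$, the extra sign carried by $a_2$ is absorbed by choosing $\rho_1 = \rho_2$, again giving $V_1 \equiv V_2$. Since our parameterization keeps the parabolas on one side of the common parallel rulings, corresponding points are joined without rule segments crossing, so each case exhibits a genuine valid constant fold-angle assignment, completing the ``if'' direction.

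I expect the main obstacle to be bookkeeping rather than depth: the squaring step in necessity can introduce a spurious sign, so the delicate point is the return to the unsquared equation to pin down $a_2 = -p\,a_1$ and thereby the exact $\mp$ pairing between the fold-angle sign (encoded by $p$) and the orientation of the second parabola (encoded by $a_2$). A secondary care is to confirm that ``$\delta_1 = \delta_2$ with $a_1 = \pm a_2$'' corresponds precisely to the two named rigid operations---in particular that the reflection axis in Case 2 is perpendicular, not parallel, to the rulings---which the explicit coordinate computation above settles directly.
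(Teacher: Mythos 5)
Your proof is correct and follows essentially the same route as the paper: derive the three coefficient equations from $V_1 \equiv V_2$ (the paper's displayed system), conclude $p^2=1$, $\delta_1=\delta_2$, $a_1^2=a_2^2$ with the sign pairing $a_2=-p\,a_1$, and verify sufficiency by substituting each case back into $V_1, V_2$. Your treatment is somewhat more explicit than the paper's --- in particular the return to the unsquared identity to justify ``$p=\pm 1$, $a_1=\mp a_2$, signs corresponding,'' and the coordinate check that the two cases are exactly a translation along the rulings and a reflection across a perpendicular line --- but these are details the paper compresses, not a different argument.
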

\begin{proof}
Necessity follows the discussions above.
In Case 1 (Figure~\ref{fig:parabola-parabola} Left), $a_1 = a$ and $a_2 = a$, so
\begin{align}
V_1(t) = -\frac{2 a \tan \frac{\rho_1}{2}}{\sqrt{4 a^2 (t)^2+1}}
\textrm{\quad and\quad}
V_2(t) = \frac{2 a \tan \frac{\rho_2}{2}}{\sqrt{4 a^2 (t)^2+1}}.
\end{align}
Therefore, $\tan \frac{\rho_1}{2} = -\tan \frac{\rho_2}{2}$ gives a valid fold-angle assignment.
In Case 2  (Figure~\ref{fig:parabola-parabola} Right), $a_1 = a$ and $a_2 = -a$, so
\begin{align}
V_1(t) = -\frac{2 a \tan \frac{\rho_1}{2}}{\sqrt{4 a^2 (t)^2+1}}
\textrm{\quad and\quad}
V_2(t) = -\frac{2 a \tan \frac{\rho_2}{2}}{\sqrt{4 a^2 (t)^2+1}}.
\end{align}
Therefore, $\tan \frac{\rho_1}{2} = \tan \frac{\rho_2}{2}$ gives a valid fold-angle assignment.
\end{proof}
Theorems~\ref{th:reciprocal} and~\ref{th:parabola} complete the possible cases of naturally ruled crease pattern of two conic curves with valid fold-angle assignment.
To use this result for analyzing conic curved foldings, refer to Table~\ref{table:compatible}.

\section{Analysis of Specific Models}
\label{sec:models}

In this section, we apply the tools of Section~\ref{sec:conics} to
analyze a few models designed by the third author
(before his death in 1999), as documented in \cite{DuksThesis}.
This analysis allows us to detect whether conic curved crease patterns
cannot properly fold with the natural ruling.
When we find that the design satisfies the necessary conditions,
it tells us that things work locally between pairs of creases,
but we remain uncertain whether the full design ``exists''
(can be properly folded) with the natural ruling.
When we find that the design violates a necessary condition,
it does not tell us that the curved crease pattern is impossible to fold,
only that any proper folding must use a different ruling.
We can only conjecture that the third author intended to use the natural
ruling, but he may also have been aware in these cases that the
natural ruling failed to fold.

\paragraph{Huffman tower.}
The classic Huffman tower of Figure~\ref{fig:Huffman tower} cannot fold with
the (drawn) natural ruling because there is an incompatibility between
a circle and parabola, whose eccentricities are $e=0$ and $e=1$ respectively.
On the other hand, we were able to design a simpler tower, shown in
Figure~\ref{fig:Duks tower}, that does satisfy the natural ruling conditions,
as it uses only parabolas (which all have eccentricity~$1$)
reflected orthogonal to rule segments.

We conjecture that the curved crease pattern of the original Huffman tower
still does fold, just with a different (nonreflecting) ruling.
As evidence toward this conjecture, Figure~\ref{fig:discrete Huffman tower}
shows a properly folded discrete model of the Huffman tower which,
taken to the limit, might give a proper folding of the same crease pattern
(but with a different ruling).

\begin{figure}
  \centering
  \includegraphics[width=1.0\linewidth]{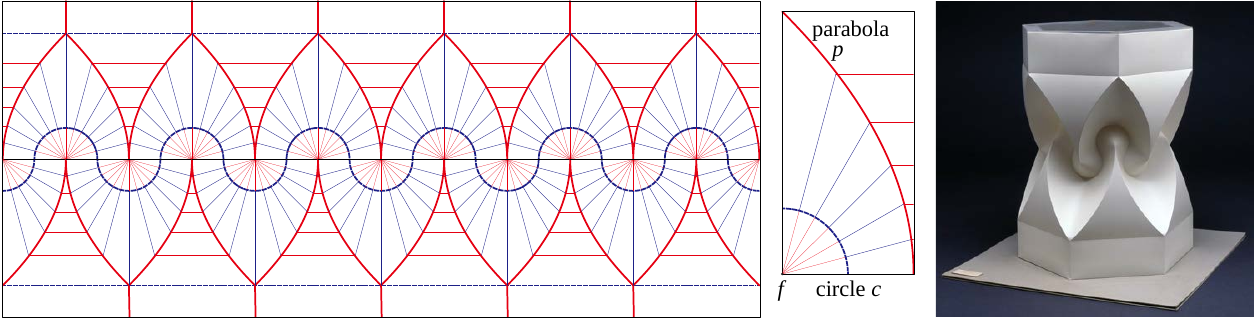}
  \caption{``Hexagonal column with cusps'' designed by the third author \cite{Huffman_Origami5,DuksThesis} drawn with the natural ruling. This crease--rule pattern cannot fold because circle (eccentricity $e=0$) is not compatible with parabola ($e=1$).}
  \label{fig:Huffman tower}
\end{figure}

\begin{figure}
  \centering
  \includegraphics[width=1.0\linewidth]{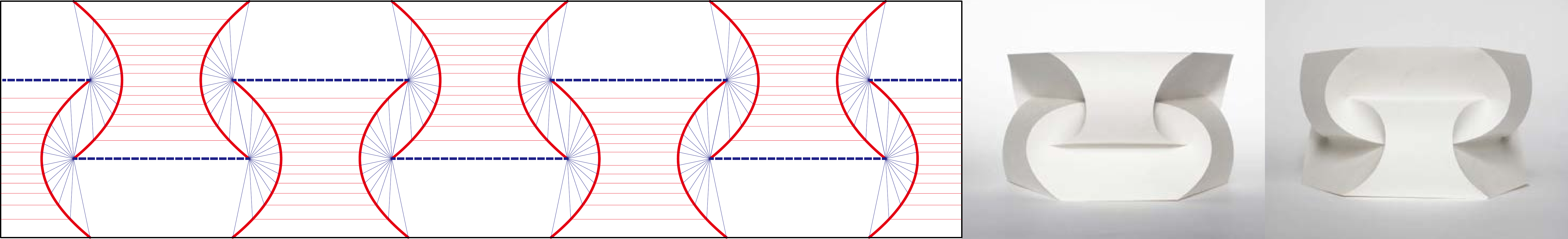}
  \caption{A simple tower using only parabolas, designed by the fourth author
           in the style of the third author, by tiling a part of his ``Arches''
           design \cite[Fig.~2.3.12]{DuksThesis} in a different direction.}
  \label{fig:Duks tower}
\end{figure}

\begin{figure}
  \centering
  \includegraphics[width=1.0\linewidth]{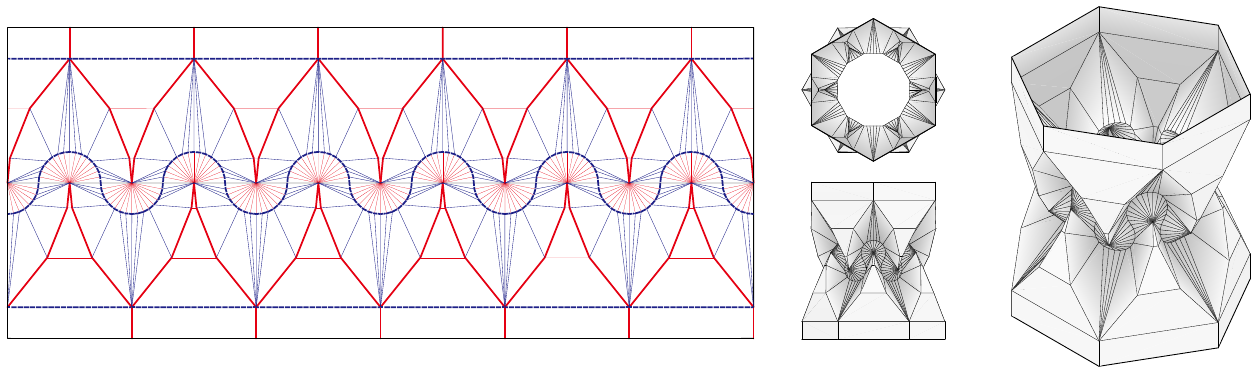}
  \caption{Discrete approximation of Figure~\ref{fig:Huffman tower}, suggesting a possible proper ruling pattern (not reflecting through conics). Produced with the fifth author's software, Freeform Origami.}
  \label{fig:discrete Huffman tower}
\end{figure}

\paragraph{Four columns.}
Figure~\ref{fig:four columns} shows the next design we consider,
which uses two continuous ``pleats'' to form four pipes that seem to
transform from wide to narrow.  The third author exhibited this model
(in a white plexiglass frame) at his exhibition at UCSC in 1978.
The crease pattern consists of quadratic splines made up of parabolic arcs.
Because all parabolas have eccentricity~$1$, all the rule-segment
connections are valid according to our tools.
A hand-drawn sketch on graph paper \cite[Fig.~4.4.71]{DuksThesis}
gives evidence that, in the original design,
parabola $p_2$ is a scaled copy of $p_1$
relative to their shared focus $f_{1,2}$
(and similarly for $p_4$, $p_3$, $f_{3,4}$).
%We conjecture that this model indeed properly folds with the natural
%ruling, as evidenced by the rigid origami simulation on the right of
%Figure~\ref{fig:four columns}.
This model indeed properly folds with the natural ruling by Theorem~\ref{vertex-free} as it is a vertex-less curved origami except at the inflection point between parabolas. 
Figure~\ref{fig:four columns} on the right shows its rigid origami simulation.

\begin{figure}
  \centering
  \includegraphics[width=1.0\linewidth]{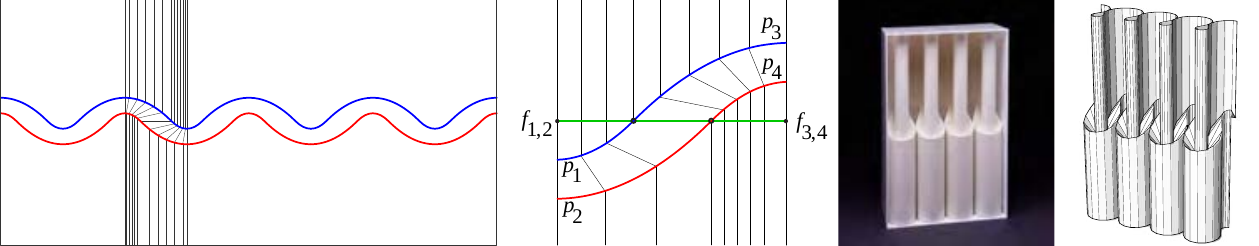}
  \caption{``Four columns'' designed by the third author in 1978 \cite[Fig.~4.4.70]{DuksThesis}.
    Left to right: conjectured crease pattern with natural ruling
    \cite[Fig.~4.4.72]{DuksThesis}; detail of parabolas $p_1,p_2$ with focus
    $f_{1,2}$ and parabolas $p_3,p_4$ with focus $f_{3,4}$;
    photograph by Tony Grant of third author's vinyl model;
    rigid origami simulation produced with Freeform Origami.}
  \label{fig:four columns}
\end{figure}

\paragraph{Angel wings.}
Figure~\ref{fig:angel wings} shows another negative example, which folds
several ``concentric'' hyperbolic pleats into a nearly flat model.
This design was one of the third author's last, completing it one year
before his death; although the third author did not title the piece,
his family calls it ``angel wings''.
The crease pattern consists of entire (half) hyperbolas, which we are
fairly certain share the same foci, and just shift the nearest point
by integral amounts along the vertical axis.
As a consequence, the eccentricities are all different, which means that
the design cannot properly fold with the natural ruling.
Surprisingly, we could still produce a reasonable-looking rigid origami
simulation; we are not sure why this works so well, but it certainly
shows the limits of ``trusting'' a simulation.

\begin{figure}
  \centering
  \includegraphics[width=1.0\linewidth]{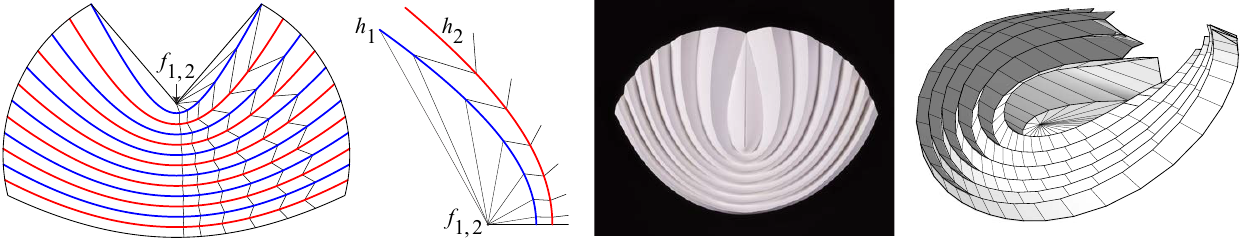}
  \caption{Curved crease design by the third author in 1998 \cite[Fig.~4.4.70]{DuksThesis}.
    Left to right: conjectured crease pattern with natural ruling
    \cite[Fig.~4.5.6]{DuksThesis}; rotated detail of hyperbolas $h_1,h_2$ with
    one focus at $f_{1,2}$ and the other focus at a reflection (not shown);
    photograph by Tony Grant of third author's vinyl model;
    rigid origami simulation produced with Freeform Origami.}
  \label{fig:angel wings}
\end{figure}

\paragraph{Starburst.}
Figure~\ref{fig:starburst} shows a final example, which folds
a rotationally symmetric pattern of three nested levels of closed
curved creases alternating ``bumps in'' and ``bumps out''.
(In fact, a fourth level is drawn, but ended up getting cut into the
paper boundary.)  This model was also exhibited at UCSC in 1978.
The natural ruling here uses lots of cone rulings, but one interaction
we can analyze with our tools: between an outermost hyperbola and the
ellipse immediately within.  The validity of this rule-segment connection
depends on the exact placement of the foci and the resulting eccentricities,
and unfortunately, we do not have precise coordinates for the third author's
design, only a hand-drawn sketch.  But it is definitely possible to construct
a valid interaction.

\begin{figure}
  \centering
  \includegraphics[width=1.0\linewidth]{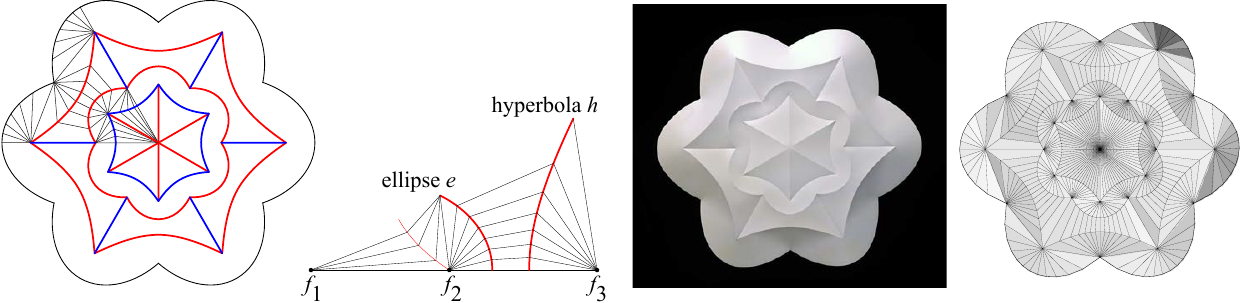}
  \caption{``Starburst'' designed by the third author before 1978
    \cite[Fig.~4.8.23]{DuksThesis}.
    Left to right: conjectured crease pattern with natural ruling
    \cite[Fig.~4.8.24]{DuksThesis}; detail of interaction between outermost
    hyperbola and an ellipse nested within;
    photograph by Tony Grant of third author's vinyl model;
    rigid origami simulation produced with Freeform Origami.}
  \label{fig:starburst}
\end{figure}

\section*{Acknowledgments}
%\paragraph{Acknowledgments.}

We thank the Huffman family for access to the third author's work,
and permission to continue in his name.

% reference section
\bibliographystyle{osmebibstyle}
\bibliography{between}

\begin{thebibliography}{99}
\newcommand{\enquote}[1]{``#1''}
\newcommand{\doi}[1]{doi:#1}

\bibitem[Demaine et~al.~10]{Huffman_Origami5}
Erik~D. Demaine, Martin~L. Demaine, and Duks Koschitz.
\newblock \enquote{Reconstructing {D}avid {H}uffman's Legacy in Curved-Crease
  Folding.}
\newblock In \emph{Origami$^5$: Proceedings of the 5th International Conference
  on Origami in Science, Mathematics and Education}, pp.~39--52. Singapore: A K
  Peters, 2010.

\bibitem[Demaine et~al.~14]{HuffmanLens_Origami6}
Erik~D. Demaine, Martin~L. Demaine, David~A. Huffman, Duks Koschitz, and
  Tomohiro Tachi.
\newblock \enquote{Characterization of Curved Creases and Rulings: Design and
  Analysis of Lens Tessellations.}
\newblock In \emph{Origami$^6$: Proceedings of the 6th International Meeting on
  Origami in Science, Mathematics and Education},  1,  1, pp.~209--230. Tokyo,
  Japan: American Mathematical Society, 2014.

\bibitem[Demaine et~al.~15]{CurvedCrease_Symmetry}
Erik Demaine, Martin Demaine, Duks Koschitz, and Tomohiro Tachi.
\newblock \enquote{A review on curved creases in art, design and mathematics.}
\newblock \emph{Symmetry: Culture and Science} 26:2 (2015), 145--161.

\bibitem[Fuchs and Tabachnikov~99]{Fuchs-Tabachnikov-1999}
Dmitry Fuchs and Serge Tabachnikov.
\newblock \enquote{More on Paperfolding.}
\newblock \emph{The American Mathematical Monthly} 106:1 (1999), 27--35.

\bibitem[Fuchs and Tabachnikov~07]{Fuchs-Tabachnikov-2007-developable}
Dmitry Fuchs and Serge Tabachnikov.
\newblock \enquote{Developable Surfaces.}
\newblock In \emph{Mathematical Omnibus: Thirty Lectures on Classic
  Mathematics}, Chapter~4. American Mathematical Society, 2007.

\bibitem[Koschitz~14]{DuksThesis}
Richard~Duks Koschitz.
\newblock \enquote{Computational Design with Curved Creases: David Huffman's
  Approach to Paperfolding.}
\newblock Ph.D. thesis, Massachusetts Institute of Technology, 2014.

\bibitem[Sternberg~09]{Sternberg2009}
Saadya Sternberg.
\newblock \enquote{Curves and Flats.}
\newblock In \emph{Origami$^4$: Proceedings of the 4th International Conference
  on Origami in Science, Mathematics and Education}, pp.~9--20. A K Peters,
  2009.

\bibitem[Tachi~09]{Tachi-2008-quad}
Tomohiro Tachi.
\newblock \enquote{Generalization of Rigid Foldable Quadrilateral Mesh
  Origami.}
\newblock In \emph{Proceedings of the International Association for Shell and
  Spatial Structures (IASS) Symposium}. Valencia, Spain, 2009.

\bibitem[Tachi~13]{Tachi-2013-transformables}
Tomohiro Tachi.
\newblock \enquote{Composite Rigid Foldable Curved Origami Structure.}
\newblock In \emph{Proceedings of the First Conference on Transformables}.
  Seville, Spain, 2013.

\end{thebibliography}

% author affiliations are appended at end of paper
\theaffiliations

\end{document}